\documentclass[pra,superscriptaddress,twocolumn,nofootinbib]{revtex4}
\usepackage{amsmath,graphicx,epsfig,color,amsfonts}
\usepackage[hypertex]{hyperref}
\usepackage{amsthm}
\usepackage{url}

\def\be{\begin{equation}}
\def\ee{\end{equation}}
\def\bea{\begin{eqnarray}}
\def\eea{\end{eqnarray}}
\def\bma{\begin{mathletters}}
\def\ema{\end{mathletters}}

\def\0{\overline{0}}

\def\q0{\underline{0}}

\def\C{{\mathbb C}}
\def\id{{\mathbb I}}

\def\tr{\mbox{tr}}
\def\one{\leavevmode\hbox{\small1\normalsize\kern-.33em1}}

\def\bra#1{\langle#1|} \def\ket#1{|#1\rangle}
\def\braket#1#2{\langle#1|#2\rangle}

\def\proj#1{\ket{#1}\!\bra{#1}}

\newtheorem{theo}{Theorem}

\newtheorem{prop}[theo]{Proposition}

\def\id{{\mathbb I}}
\def\tr{\mbox{tr}}

% new definition
\definecolor{nblue}{rgb}{0.2,0.2,0.9}
\definecolor{ngreen}{rgb}{0.2,0.55,0.2}%272
\definecolor{nred}{rgb}{0.9,0.2,0.2}%711
\definecolor{norange}{rgb}{1,0.5,0}
\definecolor{nblack}{rgb}{0,0,0}

\begin{document}

\title{Certifying entangled measurements in known Hilbert spaces}

\author{Tam\'as V\'ertesi}
\affiliation{Institute of Nuclear Research of the Hungarian
Academy of Sciences}

\author{Miguel Navascu\'es}
\affiliation{Facultad de Matem\'aticas, Universidad Complutense de Madrid}

\begin{abstract}
We study under which conditions it is possible to assert that a
joint demolition measurement cannot be simulated by Local
Operations and Classical Communication. More concretely, we
consider a scenario where two parties, Alice and Bob, send each an
unknown state to a third party, Charlie, who in turn interacts
with the states in some undisclosed way and then announces an
outcome. We show that, under the assumption that Alice and Bob
know the dimensionality of their systems, there exist situations
where the statistics of the outcomes reveals the nature of
Charlie's measurement.

\end{abstract}

\maketitle

\section{Introduction}

Quantum nonlocality, the fact that entangled quantum systems
separated in space can violate Bell inequalities \cite{Bell64}, is
one of the most profound discoveries in science. This nonlocal
nature of entanglement has also been identified as an essential
resource for various quantum information processing tasks. It
allows to reduce communication complexity \cite{review} or makes
it possible to devise device-independent protocols for quantum key
distribution \cite{key}, genuine random number generations
\cite{rng}, and state tomography \cite{tomo}. Indeed, these tasks
can be performed without resorting to the actual inner working of
the devices, which is a feat without counterpart in the classical
world.

In all of the above instances, quantum nonlocality arises when
space-like separated measurements are performed over jointly
prepared entangled states. However, the existence of nonlocality
can also be manifested in a dual setup
\cite{PeresWootters},\cite{Bennett}: quantum states which are
prepared separately (hence unentangled) exhibit anomalous
correlations when measured jointly. In this case, more information
can be revealed by joint measurements than can be gained by using
any sequence of Local Operations assisted by Classical
Communication (LOCC). In particular, the superiority of globally
entangled measurements over LOCC ones has been proved conclusively
\cite{GisinPopescu}. The improved performance is due to
entanglement occurring in the process of measurement and again it
is not associated with the states.

For quantifying the effectiveness of entangled measurements,
different figures of merit have been proposed, like the Shannon
mutual information \cite{PeresWootters} or the quantum fidelity
\cite{GisinPopescu}. A common feature of those is, though, that
one must rely both on the precise \emph{form of the states}
prepared and on the \emph{a priori probabilities} of the
preparations; otherwise, the estimations on the figure of merit
would be not reliable. Hence, in order to ascertain that a
measurement is truly globally entangled and not only an LOCC one,
one must resort to a detailed knowledge of the preparation
procedure. Clearly, this issue sheds doubt on the reliability of
the obtained results.

In the present paper we take another approach, more in the spirit
of a black box scenario, and tackle the problem of certifying
entangled measurements by introducing a kind of dual Bell
inequalities. These inequalities involve correlation terms which
can be gathered from experimental data, and do not depend on the
specific form of the quantum states to be prepared. Actually, the
only necessary condition entering in the derivation is the Hilbert
space dimension of the prepared particles. Taking into account
that entangled measurements enable increased classical capacity of
quantum channels \cite{BS} and efficient quantum state estimation
\cite{MassarPopescu}, in the future it would also be interesting
to find applications related to these tasks within a black box
approach.

The notation we use along the article is introduced in the next
section. Section~\ref{nonlin} provides an instance of certifiable
entangled measurements based on a nonlinear inequality.
Section~\ref{lin} is devoted to a numerical study where linear
inequalities are applied to detect entangled measurements in the
simplest possible scenarios. Section~\ref{conc} summarizes our
conclusions.

\section{Preliminaries and notation}

Let us imagine the following communication scenario, close to the
spirit of the simultaneous message passing model \cite{SMP}. Two
separated parties, Alice and Bob, have each some preparation
device with $N$ possible inputs. Alice (Bob) is thus able to
prepare any of the unknown qudit states
$\{\rho_x\in\C^D\}_{x=1}^N$ ($\{\sigma_y\in\C^D\}_{y=1}^N$). Alice
and Bob then send their states to a third party, Charlie, who
performs a measurement over those two states, announcing a dit $a$
of dimension $K$ at the end of the process. Denote by
$\{M_a:a=0,...,K-1\}$ the elements of Charlie's Positive Operator
Valued Measure (POVM). Then, for each pair of inputs $x$ and $y$,
several repetitions of this primitive would allow Alice and Bob to
estimate the frequency of occurrence

\be
P(a|x,y)\equiv \tr(\rho_x\otimes\sigma_y\cdot M_a)
\ee

\noindent of each outcome $a=1,2,\ldots,K$. Depending on the form
of the POVM elements $\{M_a:a=0,...,K-1\}$, we will distinguish
four different types of measurements:

\vspace{10pt}
\emph{Classical measurements.} \noindent Suppose that Charlie's
measurement is fixed in advance to a given complete projective
measurement on each system (say, in the computational basis),
followed by some random processing of the raw output. Any set of
distributions $P(a|x,y)$ so generated can be reproduced in a
classical setting where Alice and Bob send each one dit of
information to Charlie, who, in turn, applies a random function
$f:\{0,...,D-1\}\times \{0,...,D-1\}\to K$ and then announces the
result. For each $NDK$ scenario, we will denote by $\tilde{P}^C$
the set of all possible correlations $P(a|x,y)$ that can be
attained in such a way.

It is easy to see that any point of $\tilde{P}^C$ is a convex
combination of deterministic classical strategies. As we will see,
though, the converse is not true. On the other hand, if Alice, Bob
and Charlie share random variables, then any convex sum of
deterministic strategies can be attained. Since for any $NDK$
scenario the number of deterministic classical strategies is
finite, the set of correlations attainable through classical maps
and shared randomness is a polytope, that we will denote by $P^C$.
This classical set is symbolized in Figure~\ref{sets} by a square.

%\begin{center}
\begin{figure}
%\vspace{-1cm}
%width=\columnwidth
\includegraphics[width=7.5 cm]{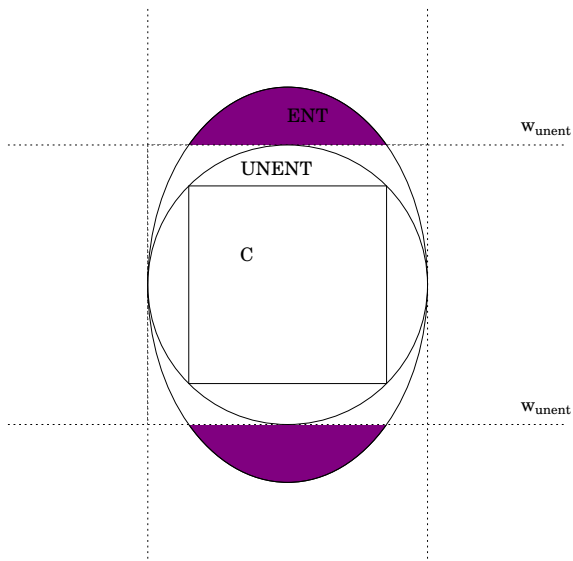} \caption{
A schematic picture of the 2-dimensional slice of the probability
space for different sets in the general $NDK$ scenario.
The square represents the classical set $P^C$, the circle and the
ellipse designate the set of unentangled ($P^{unent}$) and the set
of entangled measurements ($P^{ent}$), respectively. In the purple region
entangled measurements are witnessed by linear inequalities.} \label{sets}
\end{figure}
%\end{center}

\vspace{10pt}

\emph{Unentangled and LOCC measurements.} \noindent If each of the
POVM elements $M_a$ is a separable operator, we will say that the
measurement $M$ is unentangled. We will say that Charlie's
measurement can be attained via Local Operations and Classical
Communications (LOCC) if $M$ corresponds to a sequence of local
measurements on Alice's and Bob's individual qubits, with each
measurement possibly depending on the outcomes of earlier
measurements. It is known that the class of unentangled
measurements is strictly greater than the class of LOCC
measurements \cite{Bennett}. This implies that
$\tilde{P}^{unent}\supseteq \tilde{P}^{LOCC}$, where
$\tilde{P}^{unent},\tilde{P}^{LOCC}$ resp. denote the sets of
distributions $P(a|x,y)$ attainable through unentangled and LOCC
measurements in a given $NDK$ scenario. It is an open question
whether such an inclusion is strict.

As before, if we allow Alice, Bob and Charlie share some prior
random variables, the resulting sets of correlations
$P^{unent}\supseteq P^{LOCC}$ are convexifications of the former.
A two-dimensional slice of the unentangled set $P^{unent}$ is
symbolized in Figure~\ref{sets} by a circle.

\vspace{10pt}

\emph{General measurements.} \noindent Here the measurement
operators are only limited by positivity and normalization
constraints, i.e.,
\be
M_a\succeq 0,\mbox{ }\sum_{a=0}^{K-1}M_a=\id,
\ee

\noindent where the operators $M_a\in B(\C^D\otimes \C^D)$ may
very well be entangled. These are the most general measurements
allowed by quantum mechanics. In analogy with the former sets, we
will denote by $\tilde{P}^{ent}$ and $P^{ent}$ the set of
distributions accessible through general joint measurements and
its convex hull. The latter is depicted as an ellipse in
Figure~\ref{sets}.

\vspace{10pt}

The numerical study of the sets $\tilde{P}^C$, $\tilde{P}^{unent}$
and $\tilde{P}^{ent}$ is much more convoluted than that of their
convexifications $P^C$, $P^{unent}$ and $P^{ent}$. Consequently,
most of this article is devoted to the numerical analysis of the
latter.

If we reflex a bit about the previous definitions, it will soon be
clear that the above sets of correlations can only be
distinguished experimentally in a black box way if the $NDK$
scenario is such that $N>D$. Otherwise, by forcing Alice and Bob
to send orthogonal states, Charlie could always infer the values
of $x,y$ through an appropriate projective measurement, and so
\emph{any} conceivable distribution $P(a|x,y)$ could be
classically realized. The simplest scenario where we may expect to
find an interesting structure is thus the 322 scenario. And,
indeed, the next example shows that there the sets $\tilde{P}^C$
and $\tilde{P}^{unent}$ differ from $\tilde{P}^{ent}$. It also
proves that neither of the former two sets is convex, and so the
convexifications $P^C,P^{LOCC},P^{unent}$ are non-trivial.

\section{An instance of entanglement detection}\label{nonlin}

Consider the 322 scenario, and denote by $P$ the matrix of
probabilities $P_{xy}\equiv P(0|x,y)$. Suppose now that
\begin{equation}
P=\begin{pmatrix}
 0 &  1/4 & 1/4\\
 1/4 & 0 & 1/4\\
 1/4 & 1/4 & 0
\end{pmatrix}.
\label{nonlinear}
\end{equation}

\noindent This set of probabilities can be attained if the states
$\{\rho_x,\sigma_y\}$ and the POVM element $M\equiv M_{a=0}$
correspond to
\begin{align}
\rho_1&=\proj{0},&\rho_2&=\proj{+},&\rho_3&=\proj{+i},\nonumber\\
\sigma_1&=\proj{1},&\sigma_2&=\proj{-},&\sigma_3&=\proj{+i},
\end{align}
\begin{equation}
M=\proj{\Psi^+},
\end{equation}

\noindent where
$\ket{\Psi^+}=\frac{1}{\sqrt{2}}(\ket{00}+\ket{11})$. It hence
follows that $P\in \tilde{P}^{ent}$.

We will prove that $M$ must be an entangling measure (i.e.,
$P\not\in \tilde{P}^{unent}$) by \emph{reductio ad absurdum}.
Imagine, therefore, that $M$ is separable, i.e.,
$M=\sum_{i=1}^K\lambda_i\proj{u_i}\otimes\proj{v_i}$. Then, the
condition $P_{11}=0$ implies that: 1) $\rho_1$ or $\sigma_1$ (or
both) is a pure quantum state; and 2) there exists a subset of
indices ${\cal I}\subset \{1,2,...,K\}$ such that
\be
M=\sum_{i\in {\cal I}}\lambda_i\rho_1^\perp\otimes\proj{v_i}+\sum_{i\not\in {\cal I}}\lambda_i\proj{u_i}\otimes \sigma_1^\perp.
\label{decomp}
\ee

\noindent Here $\omega^\perp$ denotes $\id_2-\omega^\perp$, for
any qubit state $\omega$. From
$\tr(M\rho_1\otimes\sigma_1)=P_{11}\not=P_{12}=\tr(M\rho_1\otimes\sigma_2)$
we have that $\sigma_1\not=\sigma_2$, and thus
$\tr(\sigma_1\sigma_2^\perp)\not=0$. Likewise,
$\tr(\rho_x\rho^\perp_{x'}),\tr(\sigma_y\sigma^\perp_{y'})\not=0$,
for all $y\not=y',x\not=x'$. This implies that
\begin{eqnarray}
&&\tr(M\rho_2\otimes\sigma_2)=\sum_{i\in {\cal I}}\lambda_i\tr(\rho_1^\perp\rho_2)\tr(\proj{v_i}\sigma_2)\nonumber\\&&+\sum_{i\not\in {\cal I}}\lambda_i\tr(\proj{u_i}\rho_2)\tr(\sigma_1^\perp\sigma_2)\nonumber\\
&&\geq\delta\sum_{i\in {\cal I}}\lambda_i\tr(\proj{v_i}\sigma_2)+\delta\sum_{i\not\in {\cal I}}\lambda_i\tr(\proj{u_i}\rho_2),
\end{eqnarray}

\noindent for some $\delta>0$. The condition $P_{22}=0$ therefore
requires that $\tr(\proj{v_i}\sigma_2)=0,i\in{\cal I}$, and
$\tr(\proj{u_i}\rho_2)=0,i\not\in {\cal I}$, so
\be
M=\sum_{i\in {\cal I}}\lambda_i\rho_1^\perp\otimes\sigma_2^\perp+\sum_{i\not\in {\cal I}}\lambda_i\rho_2^\perp\otimes \sigma_1^\perp.
\label{decomp2}
\ee

\noindent Using this last decomposition, we get
\begin{align}
\tr(M\rho_3\otimes\sigma_3)=&\sum_{i\in {\cal I}}\lambda_i\tr(\rho_1^\perp\rho_3)\tr(\sigma_2^\perp\sigma_3)\nonumber\\&+\sum_{i\not\in {\cal I}}\lambda_i\tr(\rho_2^\perp\rho_3)\tr(\sigma_1^\perp\sigma_3)\nonumber\\
&\geq \delta'\sum_{i=1}^K\lambda_i\geq\delta'\tr(M\rho_1\otimes\sigma_2)>0.
\end{align}

\noindent with $\delta'>0$. But $P_{33}=0$, from which it follows
that $M$ is indeed entangling. Actually, it is possible to obtain a
non-linear witness to detect $M$'s entanglement if we just follow
the previous steps carefully while keeping track of the
approximation errors (see the Appendix).

We have thus proven that $P\not\in \tilde{P}^{unent}$, i.e., the
matrix of probabilities $P$ cannot be observed if only unentangled measurements are allowed. However, the matrix
$P_{xy}$ can be expressed as a convex combination of four
deterministic classical strategies. Indeed,
\begin{equation}
P=\frac{1}{4}\begin{pmatrix}
 0 & 1 & 1\\
 0 & 0 & 0\\
 0 & 0 & 0
\end{pmatrix}+
\frac{1}{4}\begin{pmatrix}
 0 &   0 & 0\\
 1 &   0 & 1\\
 0 &   0 & 0
\end{pmatrix}+
\frac{1}{4}\begin{pmatrix}
 0 &  0 & 0\\
 0 &  0 & 0\\
 1 &  1 & 0
\end{pmatrix}+
\frac{1}{4}\begin{pmatrix}
 0 &   0 & 0\\
 0 & 0 & 0\\
 0 & 0 & 0
\end{pmatrix}.
\label{convexcomb}
\end{equation}

The sets $\tilde{P}^{C},\tilde{P}^{LOCC},\tilde{P}^{unent}$ are
hence not convex, since otherwise $P$ would belong to them.
Allowing the three parties to have shared randomness thus leads to
a radically different scenario.

\section{Numerical study of $P^C$, $P^{unent}$ and $P^{ent}$ in the $N22$
scenario}\label{lin}

\subsection{Procedure}

Since convex sets can be completely characterized by systems of
linear inequalities, it is legitimate to investigate how different
linear constraints on $P^C$ of the form
$\sum_{x,y}W_{xy}P_{xy}\leq w_c$ may be violated by elements of
$P^{unent}$ and $P^{ent}$.

This leads us to the problem of maximizing quantities like
\begin{equation}
\sum_{x,y=1}^{N}{W_{x,y}P_{x,y}}=\sum_{x,y=1}^N{W_{xy}\tr(\rho_x\otimes\sigma_y
M)}
\label{maxi}
\end{equation}
over all possible POVM elements $M$ in the unentangled or the
general class, and over all qubit states $\rho_x$ and $\sigma_y$
so as to get the values $w_{unent}$ and $w_{ent}$. Clearly, if
$w_{ent}>w_{unent}$ for some matrix $W$, then there exist
experimental situations where one can prove that the statistics
observed in the lab cannot be simulated with unentangled measurements and shared randomness.

Numerical optimization to obtain $w_{unent}$ is carried out very
similarly to the iterative algorithm used in \cite{activation}.
The steps are the following:
\begin{enumerate}
\item Generate some random pure qubit states $\rho_x$, $\sigma_y$,
$x,y=1,2,\ldots,N$. \item In dimension $D=2$, for fixed states
$\rho_x,\sigma_y$, maximizing Eq.~(\ref{maxi}) reduces to a
semidefinite problem: define thus
$F\equiv\sum_{x,y=1}^NW_{x,y}{\rho_x\otimes\sigma_y}$, and
maximize $\tr(MF)$, subject to $M\succeq 0,\one-M\succeq
0,PT(M)\succeq 0,\one-PT(M)\succeq 0$, where $PT$ denotes Partial
Transposition \cite{peres,horodecki}. \item Fix $M$ and $\sigma_y$
and maximize $\tr(G_x\rho_x)=\langle\psi_x|G_x|\psi_x\rangle$ for
$x=1,2,\ldots,N$ where $G_x=\tr_B(\sum_y \one\otimes\sigma_y M)$.
This amounts to find the maximum eigenvalue of $G_x$ with the
corresponding eigenvector $|\psi_x\rangle$ for each $x$. \item For
fixed $M$ and $\rho_x$, the optimal $\sigma_y$ can be found such
as $\rho_x$ in step 3. \item If convergence has not been achieved,
go back to step 2.
\end{enumerate}

This optimization algorithm may encounter several local optima,
therefore we must iterate it many times in order to ascertain with
reasonable confidence that the largest maximum has been found. In
Sections \ref{sec322} and \ref{sec422} the semidefinite programs
described in step 2 were solved using the SeDuMi package
\cite{sedumi}.

A very similar procedure can be applied if we wish to inquire
about the maximal value $w_{ent}$ which can be achieved within the
class of general quantum measurements. The only difference is in
step 2, where we have to maximize $\tr(MF)$ subject to the only
constraint $0\preceq M \preceq\one$. However, in this case the
maximum for a fixed $F$ can be easily found by diagonalization
$F=\sum_{i=1}^4\lambda_i|\phi_i\rangle\langle\phi_i|$, resulting
in the optimal
$M=\sum_{i=1}^4{\frac{\mbox{sgn}(\lambda_i)+1}{2}}|\phi_i\rangle\langle\phi_i|$.
Thereby the optimization algorithm for the case of entangled
measurements is much faster.

\subsection{The 322 scenario}
\label{sec322} We next describe the way linear witnesses were
produced for entangled measurements. Our starting point is the
classical polytope $P^C$ for the $322$ scenario, consisting of 104
extremal points in dimension 9. All its facets can be enumerated
by using the double description method implemented in Fukuda's cdd
package \cite{cdd}. The polytope $P^C$ has 1230 facets; among
them, 9 describe positivity facets $P_{x,y}\ge 0$, $x,y=1,2,3$.
The rest of them define non-trivial inequalities $W$. However,
most of them are equivalent under the following operations,
\begin{itemize}
\item permuting Alice's inputs $x$ \item permuting Bob's inputs
$y$ \item exchanging parties Alice and Bob \item multiplying by
$-1$ all coefficients $W_{x,y}$ of the matrix $W$.
\end{itemize}

It turns out that there are 13 inequivalent facets of the polytope
$P^C$. Those are listed in Table~\ref{table1}, where we also give
the maximum classical value $w_c$, the value $w_{unent}$
achievable with unentangled measurements, the value $w_{ent}$
attainable with general measurements, and the number of
vertices lying on the particular facet. We must stress that the
values for $w_{unent}$ and $w_{ent}$ come from numerical
optimization, and so only local optimality is guaranteed.
Nevertheless, due to the small dimensionality of the problem and
the number of iterations of the main algorithm, we are quite
confident about their overall optimality as well. Interestingly,
we found that in each of the 13 cases $w_{unent}$ could be
achieved with rank-2 projective measurement operators.

\begin{table*}[t]
\caption{Results for the $13$ facet inducing inequalities in the
$322$-scenario.}\vskip 0.2truecm \centering
\begin{tabular}{l l l l c | c c c c c c c c c}
\hline\hline Case & $w_c$ & $w_{unent}$ & $w_{ent}$ & vertices &
$W_{11}$ & $W_{12}$ & $W_{13}$ & $W_{21}$ & $W_{22}$ &
$W_{23}$ & $W_{31}$ & $W_{32}$ & $W_{33}$\\
\hline
1 &    6 & 6.4006 &6.4006  & 11    &    -2  &  -1 &    1 &    2 &    4 &   -2 &    4  &  -5  &  -1\\
2 &    2 & 2.8284 &2.8284  & 13    &    -2  &   0 &    2 &   -2 &    1 &   -1 &    0  &   1  &  -1\\
3 &    1 & 1.4142 &1.4142  & 20    &    -1  &  -1 &    0 &    0 &    1 &    0 &    1  &  -1  &   0\\
4 &    2 & 2.3371 &2.5     & 16    &    -1  &  -1 &    0 &   -1 &    1 &    1 &    1  &  -1  &   1\\
5 &    1 & 1.3371 &1.3371  & 12    &    -1  &  -1 &    1 &   -1 &    0 &   -1 &    0  &   1  &   0\\
6 &    2 & 2.6742 &2.6742  & 10    &    -2  &  -1 &    2 &   -2 &    1 &   -1 &    0  &   1  &   0\\
7 &    2 & 2.1623 &2.1623  &  9    &    -1  &   0 &    1 &    0 &    2 &   -2 &    1  &  -2  &  -1\\
8 &    2 & 2.1403 &2.1403  & 10    &    -2  &  -1 &    1 &    0 &    2 &   -2 &    2  &  -3  &  -1\\
9 &    1 & 1.2058 &1.2058  & 10    &    -1  &  -1 &    1 &    0 &    1 &   -1 &    1  &  -2  &  -1\\
10&    2 & 2.7275 &2.7275  & 14    &    -3  &  -2 &    1 &   -2 &    2 &    0 &    1  &  -1  &   1\\
11&    1 & 1.3094 &1.3094  & 16    &    -2  &  -2 &    1 &   -1 &    1 &    0 &    1  &  -1  &   0\\
12&    3 & 3.8467 &3.8467  & 11    &    -5  &  -3 &    2 &   -3 &    3 &   -1 &    2  &  -1  &   1\\
13&    2 & 2.1186 &2.1186  &  9    &    -3  &  -1 &    2 &   -1 &    2 &   -3 &    2  &  -3  &  -1\\
 \hline
 \end{tabular}
 \label{table1}
 \end{table*}

We now focus on inequality $\#4$ from Table~\ref{table1}, which is
the only inequality for which $w_{ent}>w_{unent}$, and hence it
enables to witness entangled measurements. The inequality looks (in an equivalent form) as
\begin{align}
&-P_{11}-P_{12}+P_{13}+P_{21}+P_{23}+P_{31}-P_{32}-P_{33}\nonumber\\
&\le\frac{2+3\sqrt 6}{4}\simeq2.3371, \label{wit322}
\end{align}
and can be violated by entangled measurements up to the value of
$2.5$. Next, we present the actual measurements and states
achieving $w_{ent}$ and $w_{unent}$.

\emph{Entangled case.} \noindent The POVM element $M$ is a rank-2
projector
\begin{equation}
M=|m\rangle\langle m|+|m^\perp\rangle\langle m^\perp|,
\end{equation}
with
\begin{align}
|m\rangle&=(0,\cos\theta,\sin\theta,0)\nonumber\\
|m^\perp\rangle&=\left(\frac{\cos\phi}{\sqrt
2},-\sin\phi\sin\theta,\sin\phi\cos\theta,-\frac{\cos\phi}{\sqrt
2}\right),
\end{align}
having $\theta=2\arctan(\frac{\sqrt{10} -1}{3})$ and $\phi=\pi/3$.
The partial transpose of $M$ has the following eigenvalues:
\begin{equation}
\lambda=\left(\frac{5-\sqrt{41}}{10},\frac{1}{5},\frac{4}{5},\frac{5+\sqrt{41}}{10}\right),
\end{equation}
with the first entry being negative. Both Alice's and
Bob's states are pure and real valued,
$\rho_x=|\psi_x\rangle\langle\psi_x|$ and
$\sigma_y=|\phi_y\rangle\langle\phi_y|$, respectively. Their
explicit forms are, respectively,
\begin{align}
|\psi_1\rangle&=(1,0),\nonumber\\
|\psi_2\rangle&=(\cos\alpha,\sin\alpha),\nonumber\\
|\psi_3\rangle&=(\cos 2\alpha,\sin 2\alpha)\nonumber,
\end{align}
and $|\phi_y\rangle=|\psi_y\rangle$ for $y=1,2,3$ with
$\alpha=2\arctan\left(\frac{2\sqrt 10-\sqrt 15}{5}\right)\simeq
0.9117$ rad. With these values in hand, the probability matrix
$P$ with components $P_{x,y}=\tr(\rho_x\otimes\rho_y M)$
looks as follows:
\begin{equation}
 P=\begin{pmatrix}
 0.125 &  0.25 & 0.875\\
 0.25 & 0.5 & 0.75\\
 0.875 & 0.25 & 0.125
\end{pmatrix}.
\label{Pent}
\end{equation}
And, indeed, evaluating the left-hand side of
inequality~(\ref{wit322}), we get $w_{ent}$=2.5.

\emph{Unentangled case.} \noindent In appropriate local bases, a
separable rank-2 projective operator can be written in the form
\begin{equation}
M=|0\rangle\langle 0|\otimes |m\rangle\langle m|+|1\rangle\langle
1|\otimes |0\rangle\langle 0|,
\end{equation}
\noindent where $|m\rangle=(\cos\theta,\sin\theta)$, and in
our particular case, $\theta=\frac{1}{2}\arccos\frac{-1}{4}\simeq 0.911738$ rad. The
states $\rho_x=|\psi_x\rangle\langle\psi_x|$ and
$\sigma_y=|\phi_y\rangle\langle\phi_y|$, on the other hand, are
given by
\begin{align}
|\psi_1\rangle&=(1,0),& |\phi_1\rangle&=(1,0),\nonumber\\
|\psi_2\rangle&=(1,0),& |\phi_2\rangle&=(\cos\alpha,\sin\alpha),\nonumber\\
|\psi_3\rangle&=(0,1),& |\phi_3\rangle&=(\cos-\alpha,\sin-\alpha),
\end{align}
with
$\alpha=-\arctan\left(\sqrt{\frac{8}{5}}+\sqrt{\frac{3}{5}}\right)\simeq
-1.11493$ rad. With these, the probability vector is
\begin{equation}
 P=\begin{pmatrix}
 0.375 &  P_{12} & P_{13}\\
 0.375 &  P_{12} & P_{13}\\
 1 & P_{12} & P_{12}
\end{pmatrix},
\label{Punent}
\end{equation}
where $P_{12}=\frac{5}{4(4+\sqrt 6)}\simeq 0.193814$ and
$P_{13}=\frac{P_{12}}{2}(5+2\sqrt 6)\simeq 0.959279$. Plugging
these numbers into the left-hand side of
inequality~(\ref{wit322}), we get $w_{unent}=\frac{2+3\sqrt
6}{4}\simeq 2.3371$. In some sense, the probability
vector~(\ref{Punent}) approximates the best way among
unentangled measurements the probability vector~(\ref{Pent})
corresponding to an entangled measurement.

\subsection{The 422 scenario}
\label{sec422} The classical polytope $P^C$ for the 422 scenario
is spanned by 520 non-redundant vertices in the 16 dimensional
probability space. However, it turned out that the full
characterization of this polytope in terms of facets is
computationally an elusive task. Instead, the problem was
approached in a different way. We scanned through all the
inequalities with small integer coefficients ($-1,0,+1$), and
sorted out all of them which are facet defining. Several
inequivalent facets have been found in this way (beyond the ones,
which are just straightforward extensions of the $322$-type
inequalities in Table~\ref{table1}). In Table~\ref{table2}, we
list all those 10 inequivalent facets for which
$w_{ent}>w_{unent}$, hence witnessing entangled measurements.
Among them, $\#9$ is equivalent to the single $322$-type witness
in (\ref{wit322}). Table~\ref{table2} contains results on $w_c$,
$w_{unent}$, $w_{ent}$, and also gives the sole negative
eigenvalue $\lambda_1$ of $PT(M)$ achieving $w_{ent}$. Since all
the other eigenvalues are positive, the absolute value
$|\lambda_1|$ defines negativity, a valid entanglement measure
\cite{GT}.

Inequality $\#2$ is interesting on its own. In this case,
$w_{ent}=3.3195$ is attained with a negativity of $0.1744$,
although a larger value of negativity of $0.1777$ is found at
another locally optimal point with $w_{ent}'=3.145$. This feature
resembles the dual case obtained in a Bell scenario, where for
several tight 2-party Bell inequalities the maximum quantum value
was achieved with non-maximally entangled 2-qubit states
\cite{partial}. It has also been observed that for all the $422$
witnesses for which unentangled measurements gave numerically maximal violation, the measurements could always be brought
to a form of rank-2 projective matrices.

\begin{table*}[t]
\caption{Results for the $10$ facet inducing inequalities
witnessing entangled measurements in the $422$-scenario.}\vskip
0.2truecm \centering
\begin{tabular}{l l l l c c | c c c c c c c c c c c c c c c c}
\hline\hline Case & $w_c$ & $w_{unent}$ & $w_{ent}$ & vertices &
$\lambda_1$ & $W_{11}$ & $W_{12}$ & $W_{13}$ & $W_{14}$ & $W_{21}$
& $W_{22}$ & $W_{23}$ & $W_{24}$ & $W_{31}$ & $W_{32}$ & $W_{33}$
& $W_{34}$ & $W_{41}$ & $W_{42}$ & $W_{43}$ & $W_{44}$\\
\hline
1 & 2 & 2.3371 & 2.3510 & 31 & -0.1238 & -1&-1&-1&0&-1&0&1&0&-1&1&-1&1&1&-1&0&1\\
2 & 3 & 3.2361 & 3.3195 & 34 & -0.1744 & -1  &  -1  &  -1  &   0  &  -1  &   0  &   1  &   1  &   0  &   1  &  -1  &   1  &   1  &  -1  &   0  &   1\\
3 & 2 & 2.3371 & 2.4369 & 38 & -0.1154 & -1  &  -1  &  -1  &   1  &  -1  &   0  &   1  &  -1  &   0  &   0  &   0  &   0  &   0  &   1  &   0  &   1\\
4 & 2 & 2.3371 & 2.4339 & 25 & -0.1133 & -1  &  -1  &  -1  &   1  &  -1  &   0  &   1  &  -1  &   0  &   0  &   1  &   1  &   1  &  -1  &   0  &  -1\\
5 & 1 & 1.3371 & 1.3413 & 28 & -0.1016 & -1  &  -1  &   0  &   0  &  -1  &   0  &  -1  &   0  &   0  &   0  &   0  &   1  &   0  &   1  &  -1  &  -1\\
6 & 2 & 2.3371 & 2.4322 & 24 & -0.1089 & -1  &  -1  &   0  &   0  &  -1  &   0  &  -1  &   1  &   0  &   1  &  -1  &  -1  &   1  &   0  &   0  &   1\\
7 & 2 & 2.1623 & 2.3028 & 19 & -0.0718 & -1  &  -1  &   0  &   0  &  -1  &   1  &  -1  &   1  &   1  &   0  &  -1  &   0  &   1  &   0  &   0  &  -1\\
8 & 2 & 2.1623 & 2.3028 & 26 & -0.0718 & -1  &  -1  &   0  &   0  &  -1  &   1  &   0  &   0  &   1  &  -1  &  -1  &   1  &   1  &   1  &  -1  &  -1\\
9 & 2 & 2.3371 & 2.5    & 72 & -0.1403 & -1  &  -1  &   0  &   0  &  -1  &   1  &   0  &   1  &   0  &   0  &   0  &   0  &   1  &  -1  &   0  &   1\\
10& 2 & 2.2058 & 2.3773 & 22 & -0.0982 & -1  &  -1  &   0  &   0  &  -1  &   1  &   0  &   1  &   0  &   1  &  -1  &  -1  &   1  &  -1  &  -1  &   1\\
 \hline
 \end{tabular}
 \label{table2}
 \end{table*}

To conclude this section, the $422$ scenario gave us a few new
entangled measurement witnesses over the single witness of
(\ref{wit322}) of the $322$ scenario. However, no significant
improvement could be found in the efficiency of detecting
entangled measurements with respect to unentangled measurements.
Also, no simple generalization of inequality~(\ref{wit322}) for
larger input alphabets was discovered among the $422$
inequalities.

\section{Conclusion}\label{conc}

In this paper, we have studied how to translate the concept of
witnessing entangled measurements to the device-independent arena.
We found that even in a black-box scenario it is possible for a
theorist to assert that a joint demolition measurement cannot be
simulated with any sequence of local measurements assisted by
classical communication: the only assumption one has to rely on is
the dimensionality of the probe states. Following this line of
thought, we derived correlation inequalities which allow to test
the superiority of entangled measurements over unentangled ones in
two-qubit experiments where 9 (Secs.~\ref{nonlin},\ref{sec322})
and 16 correlation terms (Sec.~\ref{sec422}) are estimated.

Previous routes to certify the `non-locality' of unknown measuring
devices involved carrying out complete tomography of their
associated POVM elements \cite{MM}, or envisaging a state
discrimination problem where entangled measurements have an
advantage with respect to unentangled ones
\cite{PeresWootters,GisinPopescu}. Either scheme has been
conducted experimentally \cite{MM,exp} with success. Both
approaches, though, relied crucially on a detailed knowledge of
the probe states to be measured, and so their conclusions cannot
be considered definite.

In contrast, in this article we have proven that the experimental
inaccuracies arising from an imperfect preparation of the probe
states can be completely eliminated by certifying entangled
measurements within a black box approach. Given the simplicity of
our inequalities, we thus propose their experimental
implementation as an interesting challenge.

\begin{acknowledgements}
We thank David P\'erez-Garc\'ia for valuable discussions. This
work has been supported by a J\'anos Bolyai Programme of the
Hungarian Academy of Sciences and by the European project
QUEVADIS.
\end{acknowledgements}

\appendix

\section{A Non-linear witness}

In order to obtain a witness to certify the non-locality of $P$ in
Eq.~(\ref{nonlinear}), we will make use of the following result.

\begin{prop}
\label{prob} Let $A\in B(\C^d)$ satisfy $0\preceq A\preceq \id_d$,
and let $\omega_1,\omega_2\in B(\C^d)$ be two normalized quantum
states, with $\tr(A\omega_i)=P_i$, for $i=1,2$. Then,
\be
\tr(\omega_1\cdot\omega_2)\leq \left(f(P_1,P_2)\right)^2,
\label{formu} \ee
\end{prop}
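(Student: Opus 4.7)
The plan is to first handle the pure-state case by a Cauchy-Schwarz argument, and then lift the bound to arbitrary mixed states. Along the way it should become clear that the natural choice is the Bhattacharyya coefficient $f(P_1,P_2)=\sqrt{P_1 P_2}+\sqrt{(1-P_1)(1-P_2)}$, i.e., the classical fidelity between the two Bernoulli distributions that $\{A,\id-A\}$ induces on $\omega_1$ and $\omega_2$. For pure states $\omega_i=\proj{\phi_i}$ I would split the overlap using the resolution $\id=A+(\id-A)$ and the positive square roots,
\begin{equation*}
\braket{\phi_1}{\phi_2}=\langle A^{1/2}\phi_1|A^{1/2}\phi_2\rangle+\langle(\id-A)^{1/2}\phi_1|(\id-A)^{1/2}\phi_2\rangle.
\end{equation*}
The triangle inequality followed by Cauchy-Schwarz on each summand gives $|\braket{\phi_1}{\phi_2}|\le\sqrt{P_1 P_2}+\sqrt{(1-P_1)(1-P_2)}=f(P_1,P_2)$, and squaring delivers the claim when both states are pure.

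To reach mixed states I would spectrally decompose $\omega_i=\sum_k p_k^{(i)}\proj{\phi_k^{(i)}}$, set $q_k^{(i)}=\bra{\phi_k^{(i)}}A\ket{\phi_k^{(i)}}$ so that $P_i=\sum_k p_k^{(i)} q_k^{(i)}$, and apply the pure-state bound to each pair of eigenvectors to obtain
\begin{equation*}
\tr(\omega_1\omega_2)=\sum_{k,l}p_k^{(1)}p_l^{(2)}|\braket{\phi_k^{(1)}}{\phi_l^{(2)}}|^2\le\sum_{k,l}p_k^{(1)}p_l^{(2)}f(q_k^{(1)},q_l^{(2)})^2.
\end{equation*}
The main obstacle is the last step: proving $\sum_{k,l}p_k^{(1)}p_l^{(2)}f(q_k^{(1)},q_l^{(2)})^2\le f(P_1,P_2)^2$. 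By Jensen's inequality this amounts to the joint concavity of $f^2$ on $[0,1]^2$, which I would verify either by a direct Hessian calculation or by invoking the (standard) joint concavity of the squared quantum fidelity, since $f^2$ is precisely the squared fidelity of the two-point distributions $(P_i,1-P_i)$.

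A more structural alternative would shortcut the mixed-state argument entirely: the Schatten inequality $\|X\|_2\le\|X\|_1$ applied to $X=\omega_1^{1/2}\omega_2^{1/2}$ yields $\tr(\omega_1\omega_2)=\|X\|_2^2\le\|X\|_1^2=F(\omega_1,\omega_2)^2$, while monotonicity of the quantum fidelity under the POVM $\{A,\id-A\}$ gives $F(\omega_1,\omega_2)\le f(P_1,P_2)$. Chaining these two standard facts produces the inequality immediately and sidesteps the concavity argument.
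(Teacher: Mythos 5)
Your proof is correct, but it takes a genuinely different route from the paper's. The paper diagonalizes the POVM element, $A=\sum_i\mu_i\proj{i}$, bounds $\tr(\omega_1\omega_2)=\sum_{i,j}\bra{i}\omega_1\ket{j}\bra{j}\omega_2\ket{i}$ above by $\bigl(\sum_i\sqrt{\bra{i}\omega_1\ket{i}\bra{i}\omega_2\ket{i}}\bigr)^2$ using only positivity of the states, then splits each diagonal weight as $\mu_i+(1-\mu_i)$ and applies Cauchy--Schwarz twice to land on $f(P_1,P_2)$; pure and mixed states are handled in a single elementary chain with no auxiliary lemma. You instead diagonalize the \emph{states}: your pure-state step (splitting $\id=A+(\id-A)$ and applying Cauchy--Schwarz to each piece) is essentially the same trick as the paper's final step, but lifting it to mixed states forces you to establish joint concavity of $f^2$ on $[0,1]^2$. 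That claim is true --- writing $h(t)=\sqrt{t(1-t)}$, the Hessian of $f^2$ has non-positive diagonal entries $2h''(x)h(y)$, $2h(x)h''(y)$ and determinant $4\bigl(h'(x)-h'(y)\bigr)^2\geq 0$ --- so there is no gap, but it is an extra lemma that the paper's ordering of the two Cauchy--Schwarz applications sidesteps. Your second route, $\tr(\omega_1\omega_2)=\|\omega_1^{1/2}\omega_2^{1/2}\|_2^2\leq\|\omega_1^{1/2}\omega_2^{1/2}\|_1^2=F(\omega_1,\omega_2)^2$ followed by data-processing of the fidelity under the binary POVM $\{A,\id-A\}$, is the most conceptual: it correctly identifies $f$ as the classical fidelity of the induced Bernoulli distributions, and it generalizes immediately to POVMs with more outcomes, at the price of importing the nontrivial fidelity-monotonicity theorem rather than proving everything from scratch as the paper does.
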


\noindent where
\be f(P_1,P_2)\equiv\sqrt{P_1\cdot P_2}+
\sqrt{(1-P_1)\cdot (1-P_2)}. \label{deff} \ee

\noindent Moreover, there exist two states
$\tilde{\omega}_1,\tilde{\omega}_2\in B(\C^d)$ and a POVM element
$\tilde{A}\in B(\C^d)$ that saturate the former inequality.

\begin{proof}

Let $\sum_{i}\mu_i\proj{i}$ be the spectral decomposition of $A$,
where $0\leq \mu_i\leq 1$ and $\{\ket{i}\}_{i=0}^{d-1}$ is an
orthonormal basis of $\C^d$. Then, Eq.~(\ref{formu}) follows from
the next chain of inequalities:
\begin{widetext}
\begin{eqnarray}
&\left(\tr\{\omega_1\cdot\omega_2\}\right)^{1/2}=\left(\sum_{i,j}\bra{i}\omega_1\ket{j}\bra{j}\omega_2\ket{i}\right)^{1/2}\leq\left(\sum_{i,j}\left(\bra{i}\omega_1\ket{i}\bra{j}\omega_1\ket{j}\right)^{1/2}\left(\bra{i}\omega_2\ket{i}\bra{j}\omega_2\ket{j}\right)^{1/2}\right)^{1/2}=\nonumber\\
 &=\sum_i\left(\bra{i}\omega_1\ket{i}\bra{i}\omega_2\ket{i}\right)^{1/2}=\sum_{i}\left(\mu_i\bra{i}\omega_1\ket{i}\mu_i\bra{i}\omega_2\ket{i}\right)^{1/2}+\sum_{i}\left((1-\mu_i)\bra{i}\omega_1\ket{i}(1-\mu_i)\bra{i}\omega_2\ket{i}\right)^{1/2}\leq\nonumber\\
 &\leq\left(\sum_{i}\mu_i\bra{i}\omega_1\ket{i}\right)^{1/2}\left(\sum_{i}\mu_i\bra{i}\omega_2\ket{i}\right)^{1/2}+\left(\sum_{i}(1-\mu_i)\bra{i}\omega_1\ket{i}\right)^{1/2}\left(\sum_{i}(1-\mu_i)\bra{i}\omega_2\ket{i}\right)^{1/2}=\nonumber\\
 &=\sqrt{P_1\cdot P_2}+ \sqrt{(1-P_1)\cdot(1-P_2)}.
\end{eqnarray}
\end{widetext}

To see that the bound is optimal, choose $A=\proj{0}$ and notice
that the relations $|\braket{\psi_{1,2}}{0}|^2=P_{1,2}$ and
$|\braket{\psi_{1}}{\psi_{2}}|^2=f(P_1,P_2)^2$ hold for the states
$\ket{\psi_{1,2}}=\sqrt{P_{1,2}}\ket{0}+\sqrt{1-P_{1,2}}\ket{1}$.

\end{proof}

Taking $A=M$ and $\omega_1=\rho_j\otimes\sigma_l$
($\omega_1=\rho_l\otimes\sigma_j$),
$\omega_2=\rho_k\otimes\sigma_l$
($\omega_2=\rho_l\otimes\sigma_k$) in Proposition \ref{prob}, it
is easy to see that

\begin{eqnarray}
\tr(\rho_j\cdot\rho_k)&\leq& \min_l f(P_{jl},P_{kl})^2\nonumber\\
(\tr(\sigma_j\cdot\sigma_k)&\leq& \min_l f(P_{lj},P_{lk})^2).
\end{eqnarray}

\noindent These relations will play an important role when combined with the next proposition.

\begin{prop}
\label{approxi} Let $\omega=\lambda\proj{u}\otimes\proj{v}\in
B(\C^2\otimes\C^2)$, with $\lambda> 0$, let
$\{\rho_j,\sigma_j\}_{j=1}^3$ be normalized qubit states such that
$\tr\{\rho_j\cdot \rho_k\},\tr\{\sigma_j\cdot\sigma_k\}\leq C$,
and let $\tr\{\omega\rho_j\otimes\sigma_j\}\leq c$. Then,
%Cambiar!!!
\be
\lambda\leq \frac{4c}{(1-\sqrt{C})^2}.
\label{ojo}
\ee
\end{prop}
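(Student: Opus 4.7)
Since $\omega=\lambda\proj{u}\otimes\proj{v}$ is a rank-one product operator, the hypothesis $\tr(\omega\,\rho_j\otimes\sigma_j)\le c$ factorizes into
\begin{equation}
\lambda\,p_j q_j\le c,\qquad p_j:=\bra{u}\rho_j\ket{u},\quad q_j:=\bra{v}\sigma_j\ket{v},
\end{equation}
for each $j\in\{1,2,3\}$. The conclusion $\lambda\le 4c/(1-\sqrt{C})^2$ will follow immediately if I can locate a single index $j^\star$ with $p_{j^\star}\ge(1-\sqrt{C})/2$ \emph{and} $q_{j^\star}\ge(1-\sqrt{C})/2$, so the whole task reduces to this existential claim.

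To establish it I will show that \emph{at most one} of $p_1,p_2,p_3$ can lie strictly below $(1-\sqrt{C})/2$, and likewise for $q_1,q_2,q_3$; pigeonhole on three indices then produces the desired $j^\star$. The key estimate is a lower bound on $\tr(\rho_j\rho_k)$ in terms of the overlaps. Expanding in the basis $\{\ket{u},\ket{u^\perp}\}$,
\begin{equation}
\rho_j=\begin{pmatrix}p_j & \alpha_j\\ \alpha_j^* & 1-p_j\end{pmatrix},\qquad|\alpha_j|^2\le p_j(1-p_j),
\end{equation}
one computes $\tr(\rho_j\rho_k)=p_jp_k+(1-p_j)(1-p_k)+2\,\mathrm{Re}(\alpha_j\alpha_k^*)$, and bounding the off-diagonal term by $2|\alpha_j||\alpha_k|\le 2\sqrt{p_jp_k(1-p_j)(1-p_k)}$ reassembles into a perfect square:
\begin{equation}
\tr(\rho_j\rho_k)\ge\bigl(\sqrt{(1-p_j)(1-p_k)}-\sqrt{p_jp_k}\bigr)^2.
\end{equation}

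If two indices, say $j=1,2$, both had $p_j<(1-\sqrt{C})/2$, then $\sqrt{(1-p_1)(1-p_2)}>(1+\sqrt{C})/2$ while $\sqrt{p_1p_2}<(1-\sqrt{C})/2$, so the right-hand side above would strictly exceed $C$, contradicting $\tr(\rho_1\rho_2)\le C$. Hence at most one $p_j$ is sub-threshold; the same argument on the $\sigma_j$'s gives at most one sub-threshold $q_j$. Pigeonhole then yields a common good index $j^\star$, so $c\ge\lambda p_{j^\star}q_{j^\star}\ge\lambda(1-\sqrt{C})^2/4$, which is the claim.

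The main obstacle I anticipate is recognizing the ``perfect square'' identity above: the diagonal piece $p_jp_k+(1-p_j)(1-p_k)$ together with the Cauchy--Schwarz off-diagonal bound $2\sqrt{p_jp_k(1-p_j)(1-p_k)}$ must be read as $a^2+b^2-2ab$ with $a=\sqrt{p_jp_k}$, $b=\sqrt{(1-p_j)(1-p_k)}$. Everything else---the product-form factorization of $\omega$ and the three-element pigeonhole---is routine, although some care is needed to keep the threshold inequality strict so that the pigeonhole step actually delivers a contradiction rather than a boundary case.
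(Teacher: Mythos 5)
Your proof is correct. It shares the paper's overall skeleton---factorize the hypothesis into $\lambda p_j q_j\le c$ with $p_j=\bra{u}\rho_j\ket{u}$, $q_j=\bra{v}\sigma_j\ket{v}$, then pigeonhole over the three indices using the overlap hypothesis---but both technical ingredients are implemented differently. The paper works with Bloch vectors: writing $\rho_j=(\id+\vec m_j\cdot\vec\sigma)/2$, it combines $\|\vec m_j+\vec m_k\|\ge -\vec u\cdot(\vec m_j+\vec m_k)$ with $\|\vec m_j+\vec m_k\|^2\le 2(1+\vec m_j\cdot\vec m_k)=4\tr(\rho_j\rho_k)$, which amounts to the estimate $p_j+p_k\ge 1-\sqrt{\tr(\rho_j\rho_k)}$. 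You instead expand $\rho_j$ in the $\{\ket{u},\ket{u^\perp}\}$ basis and obtain the perfect-square bound $\tr(\rho_j\rho_k)\ge\bigl(\sqrt{(1-p_j)(1-p_k)}-\sqrt{p_jp_k}\bigr)^2$, which is precisely the saturating converse of Proposition~\ref{prob} for qubits with projective $A$, and is in general slightly sharper than the Bloch-vector estimate (though it yields the same final constant). The pigeonhole is also run in the contrapositive direction: the paper lets the threshold float at $\tfrac12\sqrt{4c/\lambda}$ and argues that two of the three indices must share a small factor on the same side, whereas you fix the threshold $(1-\sqrt{C})/2$ in advance, show that at most one index per side can fall below it, and thereby exhibit explicitly a single index $j^\star$ with both factors large, from which the bound drops out in one line. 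Your arrangement is arguably cleaner and more self-contained, since it avoids Bloch-sphere geometry entirely. One caveat, shared equally by the paper's proof, is that the final step tacitly requires $C<1$ (so that $1-\sqrt{C}>0$ and the two sub-threshold bounds multiply correctly); this holds in the intended application, where $C=R^2$ with $R<1$.
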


\begin{proof}
Let $\rho_j=\frac{\id+\vec{m}_j\cdot\vec{\sigma}}{2}$,
$\sigma_j=\frac{\id+\vec{n}_j\cdot\vec{\sigma}}{2}$. By hypothesis
we have that \be
(1+\vec{u}\cdot\vec{m}_j)\cdot(1+\vec{v}\cdot\vec{n}_j)\leq
\tilde{c}, \ee

\noindent for $j=1,2,3$, where $\tilde{c}:=4c/\lambda$. It follows
that, for each inequality $j$, at least one of the factors on the
left-hand side is smaller or equal than $\sqrt{\tilde{c}}$. This
implies that there exist two indices $j,k\in\{1,2,3\},j\not=k$,
such that either \be (1+\vec{u}\cdot\vec{m}_{j,k})\leq
\sqrt{\tilde{c}} \label{alter1} \ee

\noindent or
\be
(1+\vec{v}\cdot\vec{n}_{j,k})\leq \sqrt{\tilde{c}}
\label{alter2}
\ee

\noindent holds. Let us assume that Eq. (\ref{alter1}) is true. Then we have that
\be
\|\vec{m}_j+\vec{m}_k\|\geq -\vec{u}\cdot(\vec{m}_j+\vec{m}_k)\geq 2-2\sqrt{\tilde{c}},
\ee

\noindent and hence
\be
\lambda\leq \left(\frac{4\sqrt{c}}{2-\|\vec{m}_j+\vec{m}_k\|}\right)^2.
\label{interm1}
\ee

\noindent On the other hand,
\be
\|\vec{m}_j+\vec{m}_k\|^2\leq 2(1+\vec{m}_j\cdot\vec{m}_k)=4\tr(\rho_j\cdot\rho_k)\leq 4C.
\label{interm2}
\ee

\noindent Combining Eqs.~(\ref{interm1}) and (\ref{interm2}), we
arrive at \be \lambda\leq \frac{4c}{(1-\sqrt{C})^2}. \ee

Had we supposed that Eq.~(\ref{alter2}) were true, we would have
obtained the same relation. Eq.~(\ref{ojo}) then follows from the
fact that at least one of the conditions (\ref{alter1}),
(\ref{alter2}) must hold.
\end{proof}

We are now ready to derive the non-linear inequality. Suppose that $M$ is of the form
\be
M=\sum_{i=1}^K\lambda_i\proj{u_i}\otimes\proj{v_i},
\label{separable}
\ee

\noindent with $\lambda_i>0$. It is straightforward that the
relation \be
\lambda_i\bra{u_i}\rho_j\ket{u_i}\bra{v_i}\sigma_j\ket{v_i}\leq
c_i\equiv
\lambda_i\sum_{k=1}^3\bra{u_i}\rho_k\ket{u_i}\bra{v_i}\sigma_k\ket{v_i}
\ee \noindent holds for $j=1,2,3$. Also, notice that
\be \sum_{i}
c_i=P_{11}+P_{22}+P_{33}. \label{completion} \ee

\noindent Identifying $C=R^2$, with
\be
R\equiv\max_{j\not=k}\left\{\min_lf(P_{jl},P_{kl}),\min_lf(P_{lj},P_{lk})\right\},
\label{defR}
\ee

\noindent in Proposition \ref{approxi}, we thus arrive at the
bound $\lambda_i \leq\frac{4c_i}{(1-R)^2}$. Putting all together,
we have that \be \sum_{i=1}^K\lambda_i\leq \frac{4\sum_{i=1}^Kc_i}{(1-R)^2}=\frac{4(P_{11}+P_{22}+P_{33})}{(1-R)^2}, \ee

\noindent where in the last step we made use of
Eq.~(\ref{completion}).

On the other hand, \be
\sum_i\lambda_i\geq\sum_i\lambda_i\tr(\proj{u}\rho_k)\cdot\tr(\proj{v}\sigma_l)=P_{kl},
\ee \noindent for all $k,l=1,2,3$, and so the inequality
\be
\frac{4(P_{11}+P_{22}+P_{33})}{(1-R)^2}-P_{k,l}\geq 0,
\ee

\noindent with $R$ given by Eq.~(\ref{defR}), must hold for all
$k,l$.

It is immediate to see that the example $P$ in Section
\ref{nonlin} violates it by an amount of $-1/4$.

\end{document}